\newtheorem{theorem}{Theorem}
\newtheorem{lemma}{Lemma}
\newenvironment{breakablealgorithm}
{
	\begin{center}
		\refstepcounter{algorithm}
		\hrule height.8pt depth0pt \kern2pt
		\renewcommand{\caption}[2][\relax]{
			{\raggedright\textbf{\ALG@name~\thealgorithm} ##2\par}%
			\ifx\relax##1\relax 
			\addcontentsline{loa}{algorithm}{\protect\numberline{\thealgorithm}##2}%
			\else 
			\addcontentsline{loa}{algorithm}{\protect\numberline{\thealgorithm}##1}%
			\fi
			\kern2pt\hrule\kern2pt
		}
	}{
		\kern2pt\hrule\relax
	\end{center}
}
\title{recovery of graph signals from sign measurements}
\name{Wenwei Liu$^\star$, Hui Feng$^\ast$, Kaixuan Wang$^\star$, Feng Ji$^\dagger$, Bo Hu$^\ast$}
\address{$\star,\ast$ Research Center of Smart Networks and Systems, School of Information Science and Engineering\\
Fudan University, Shanghai 200433, China\\
$^\dagger$ School of Electrical and Electronic Engineering\\
Nanyang Technological University, Singapore\\
$^\star\lbrace$wwliu21,kaixuanwang21$\rbrace$@m.fudan.edu.cn, $^\ast\lbrace$hfeng, bohu$\rbrace$@fudan.edu.cn, $^\dagger$jifeng@ntu.edu.sg}
\begin{document}

\maketitle
\begin{abstract}
Sampling and interpolation have been extensively studied, in order to reconstruct or estimate the entire graph signal from the signal values on a subset of
vertexes, of which most achievements are about continuous signals.
While in a lot of signal processing tasks, signals are not fully observed, and only the signs of signals are available, for example a rating system
may only provide several simple options.
In this paper, the reconstruction of band-limited graph signals based on sign sampling is
discussed and a greedy sampling strategy is proposed. The simulation experiments are
presented, and the greedy sampling algorithm is compared with random sampling
algorithm, which verify the validity of the proposed approach.
\end{abstract}
\begin{keywords}
Graph signal, band-limited, sign information, greedy sampling
\end{keywords}
\section{Introduction}
\label{sec:intro}

In general, many types of data in life can be modeled as graph signals, such as transportation networks, social networks, etc\cite{stankovic2019understanding,zhou2004regularization,zhang2020deep}.
Graph signal processing (GSP) has been one of the emerging fields in the signal processing community in recent years\cite{shuman2013emerging,stankovic2020data}.
Based on current research, GSP has a wide range of application scenarios in semantic segmentation, behavior
recognition, community discovery, traffic prediction and other aspects\cite{xia2021graph}.
Based on the graph Fourier Transform (GFT), techniques analogous to those in the classical Fourier theory have been developed in the GSP setting\cite{cvetkovic2009applications}. 
Common graph signals in life often show bandlimited characteristics, i.e. signal values are similar on similar vertexes\cite{anis2016efficient}.
For instance, in social networks, people who are closely connected tend to show similar preferences.
This paper focuses on such signals.

It is a fundamental problem of GSP to recover graph signals from limited samples with noise \cite{tanaka2020sampling,girault2020graph}. 
At present, there are two main sampling schemes: deterministic selection and random selection\cite{tanaka2020sampling,tsitsvero2016signals}. The deterministic approach is trying to 
find a sampling set that minimizes a preset loss function. While the random approach usually calculates the sampling probability distribution of each node according to its importance.
In practice, random selection is of low computational load, but may need more samples than deterministic selection to achieve equivalent sampling effect for subsequent reconstruction.  

In many cases, the data we're dealing with is enormous as well as hard to get accurate and continuous values. 
For instance, in a goods ranking system, we may only see the people's simple evaluations for goods such as ``like",``dislike",``indifferent". 
In this scenario, getting an exact score for each item is often a hassle, whereas it may be easier to find out it is just rated as ``like" or ``dislike", which can be viewd as $1$ and $-1$ respectively.
Such ``like" and ``dislike" are the sign information we work on.  

As a extremely simple information on a graph signal, acquiring the signs of signal value on nodes results in small computation load, because it is a low bit quantization of the graph signal\cite{jacques2013robust}.
Due to this advantage of sign information, we try to recover the original graph signal from the limited samples with the help of such coarse information. 



In this paper, we investigate the problem of reconstructing band-limited graph signals from the signs of signal values on a subset of vertexes.
we propose an effective signal sampling scheme and a corresponding signal
reconstruction algorithm based on sign information.
Furthermore, the characteristics and effects of sampling and recovery algorithm are analyzed theoretically.
Finally, we present some simulation experiments on different data sets to test the performance of the proposed algorithm, and the results verify the validity of our work.



\section{Model}
\label{sec:model}

A connected undirected graph $\mathcal{G}$ with $\mathit{N}$ vertexes can be represented as: $\mathcal{G}=\{\mathcal{V},\mathcal{E},\boldsymbol{W}\}$, where
$\mathcal{V}$ is the set of vertexes, $\mathcal{E}$ is the set of edges, and $\boldsymbol{W}\in \mathbb{R}^{\mathit{N}\times\mathit{N}}$ is the weighted adjacency
matrix with elements $\mathit{W}_\mathit{ij} >0$ if $(\mathit{i},\mathit{j})\in\mathcal{E}$, otherwise $\mathit{W}_\mathit{ij} =0$.
The degree of vertex $\mathit{i}$ is $\mathit{d_i} = \sum_{j}\mathit{W_{ij}}$, and the degree matrix $\boldsymbol{D}$ is a $\mathit{N}\times\mathit{N}$ diagonal whose $i$th diagonal element is $\mathit{d_i}$.
The graph Laplacian $\boldsymbol{L}$ is defined as: $\boldsymbol{L} = \boldsymbol{D}-\boldsymbol{W}$. Obviously, it is real symmetric, with eigendecomposition
$\boldsymbol{L}=\boldsymbol{U}\boldsymbol{\Lambda}\boldsymbol{U}^\mathit{T}$, where $\boldsymbol{\Lambda}$ is diagonal matrix of eigenvalues $0=\lambda_1\le\lambda_2\le\dots\le\lambda_N$
and $\boldsymbol{U}=[\boldsymbol{u}_1,...,\boldsymbol{u}_N]$ is an orthonormal matrix containing the corresponding eigenvectors.

A graph signal $\boldsymbol{x}:=\mathcal{V}\mapsto\mathbb{R}$ is a vector defined on graph $\mathcal{G}$. Its GFT can be written as: $\hat{\boldsymbol{x}}=\boldsymbol{U}^\mathit{T}\boldsymbol{x}$.
Each component of $\boldsymbol{x}$ is the corresponding frequency domain coefficient. Let $f_L,f_U$ be positive integers, if a graph signal satisfies: $\hat{\boldsymbol{x}}_\mathit{k}=0$ for all $\mathit{k}\not\in[\mathit{f_L},\mathit{f_U}]$, then it is called band-limited
with passband $[\mathit{f_L},\mathit{f_U}]$, bandwidth $B=\mathit{f_U}-\mathit{f_L}+1$. For such signals, they are in the following constraint space:
\begin{equation}
  \mathit{C}_b=\left\{\boldsymbol{x}\in\mathbb{R}^\mathit{N}|\,\forall \mathit{i}\not\in[\mathit{f_L},\mathit{f_U}], \boldsymbol{u}_i^T\boldsymbol{x}=0\right\}
\end{equation}
According to the above formula, it is easy to see that $C_b$ is a closed convex cone in $\mathbb{R}^\mathit{N}$ \cite{boyd2004convex}.

Consider obtaining $\mathit{M}$ samples on a subset $\mathcal{V}'$ of $\mathcal{V}$, i.e. $|\mathcal{V}'|=\mathit{M}$, then the sampling process can be described as:
\begin{equation}\nonumber
  \boldsymbol{y} = \boldsymbol{\Psi}_\mathit{v}\boldsymbol{x}
\end{equation}
where $\boldsymbol{\Psi}_\mathit{v}\in\mathbb{R}^\mathit{M\times N}$ is vertex-domain sampling matrix, with elements $(\boldsymbol{\Psi}_\mathit{v})_\mathit{ij}=1$ if the $\mathit{i}$th sample is
$\mathit{j}$, and $0$ otherwise.

The sign information at a vertex comes from a sign operation of the signal at the corresponding vertex. For example, the sign on vertex $\mathit{i}$ is $sign(\boldsymbol{x}_\mathit{i})$, where
\begin{equation}
  sign(x)=
  \left\{
    \begin{array}{lr}
      -1      & x<0 \\
      1      & x>0 \\
      0                 & otherwise 
    \end{array}
  \right.
\end{equation}
The sampling process of sign information on vertexes can be described as:
\begin{equation}
  \boldsymbol{S}_\mathit{x}=sign\left(\boldsymbol{\Psi}_\mathit{v}\boldsymbol{x}\right)
\end{equation}

For a signal with sign $\boldsymbol{S}_\mathit{x}$ on vertexes, from (3) we can derive that
the signal should be confined to a constraint space whose closed convex hull is:
\begin{equation}
  \label{sign space}
  \mathit{C_v}=\mathop{\cap}\limits_{\mathit{i}=1}^{\mathit{M}}\left\{\boldsymbol{x}\in\mathbb{R}^\mathit{N}|\,\boldsymbol{a}_\mathit{i}^\mathit{T}\boldsymbol{x}\leq 0\right\}
\end{equation}
where $\boldsymbol{a}_\mathit{i}$ is a all zero vector except that its $\mathit{j}$th component is $\pm 1$, which depends on $\left(\boldsymbol{S}_\mathit{x}\right)_\mathit{i}$.
If $\left(\boldsymbol{S}_\mathit{x}\right)_\mathit{i} = 0$, then ``$\leq$'' in (\ref{sign space}) becomes ``$=$".
It is not hard to prove that $\mathit{C_v}$ is also a closed convex cone in $\mathbb{R}^\mathit{N}$\cite{boyd2004convex}.

\section{Reconstruction Algorithm}
\label{sec:POCS}
Based on the model in the previous section, the graph signal is recovered by continuous projection
onto the convex constraint spaces in this section\cite{theodoridis2010adaptive}.

Through sign information in the sampling set, the original band-limited graph signal lies in space: $C=\mathit{C_b}\cap\mathit{C_v}$, which is likewsie a closed convex cone.
Due to the fact that $C_b$ and $C_v$ are both closed convex cones, we can define the projection operators of them\cite{theodoridis2010adaptive}.

Projecting any signal onto $\mathit{C}_b$ requires only one band-pass filtering operation:
\begin{equation}
  \boldsymbol{P}_b=\boldsymbol{U}\boldsymbol{\Gamma}\boldsymbol{U}^T
  \label{Pb}
\end{equation}
where $\boldsymbol{\Gamma}$ is a diagonal matrix in which the diagonal elements are $1$ inside the passband and $0$ outside the passband.
 
The projection operation onto $\mathit{C_v}$ can be defined as:
\begin{equation}
  \left(\boldsymbol{P}_\mathit{v}\boldsymbol{x}\right)_\mathit{j}:=
  \left\{
    \begin{array}{lr}
      0      & j\in \mathcal{V'},sign \left(\mathit{x_j}\right)\not=\left(\boldsymbol{S}_\mathit{x}\right)_\mathit{i} \\
      \mathit{x_j}                 & otherwise 
    \end{array}
  \right.
  \label{Pv}
\end{equation}

According to the definition of constraint space and projection operator, 
the signal can be reconstructed by continuously projecting onto these convex sets. 
Specifically, for any arbitary initial signal $\boldsymbol{x}_0$, the applied algorithm is a simple iterative process:  
\begin{equation}
  \boldsymbol{x}_{\mathit{n}+1}=\boldsymbol{P}_\mathit{b}\boldsymbol{P}_\mathit{v}\boldsymbol{x}_\mathit{n}
  \label{iterformula}
\end{equation}

In terms of the settings of the reconstruction iterative algorithm, we can verify that each projection is a firmly non-expansive operation. At the same time, 
the two constraint spaces are very special and conform to the characteristic of bounded linear regularity.

The following theorem can be derived based on these conditions.
\begin{theorem}
  The iterative sequence $\left\{\boldsymbol{x}_\mathit{n}\right\}$ converges to some point $\boldsymbol{x}^*$ in $C$, and the convergence
  rate is independent of the selection of the initial point $\boldsymbol{x}_0$.
\end{theorem}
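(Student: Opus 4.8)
The plan is to read the iteration $(\ref{iterformula})$ as a classical method of alternating projections (POCS) onto the two closed convex sets $C_b$ and $C_v$, and to close the argument with the standard convergence theory for compositions of firmly nonexpansive operators together with bounded linear regularity. First I would note that $C = C_b \cap C_v \neq \emptyset$: both sets are closed convex cones and therefore each contains the origin, so $0 \in C$. Using the fact (asserted above) that $P_b$ and $P_v$ are firmly nonexpansive, their composition $T := P_b P_v$ is an averaged nonexpansive self-map of $\mathbb{R}^N$, and a short argument identifies its fixed-point set with $C$: the inclusion $C \subseteq \mathrm{Fix}(T)$ is immediate, and conversely if $Tx = x$ then, picking any $z \in C$, the chain $\|x - z\| = \|P_b P_v x - z\| \le \|P_v x - z\| \le \|x - z\|$ must collapse to equalities, whence firm nonexpansiveness forces $P_v x = x$ and $P_b x = x$, i.e.\ $x \in C$.

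Given $\mathrm{Fix}(T) = C \neq \emptyset$, the Krasnoselskii--Mann / Opial fixed-point theorem for averaged operators shows that the Picard iterates $\boldsymbol{x}_{n+1} = T\boldsymbol{x}_n$ converge to some $\boldsymbol{x}^* \in \mathrm{Fix}(T) = C$; in finite dimensions weak convergence is strong convergence, which proves the first claim. For the rate I would invoke the Gubin--Polyak--Raik / Bauschke--Borwein theory of the convex feasibility problem: since $\{C_b, C_v\}$ is boundedly linearly regular (this is the ``very special'' structure of the two spaces noted above, $C_b$ being a subspace and $C_v$ a polyhedral cone), there is a constant $c \in [0,1)$ that depends only on the geometry of the pair $\{C_b, C_v\}$ --- in particular on the passband $[f_L, f_U]$, but not on $\boldsymbol{x}_0$ --- such that
\[
  \mathrm{dist}(\boldsymbol{x}_n, C) \le c^{\,n}\, \mathrm{dist}(\boldsymbol{x}_0, C), \qquad n = 0, 1, 2, \dots,
\]
and consequently $\|\boldsymbol{x}_n - \boldsymbol{x}^*\| \le M\, c^{\,n}\, \mathrm{dist}(\boldsymbol{x}_0, C)$ for a further constant $M$ determined by the same geometry. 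Hence the geometric decay rate $c$ is one and the same for every initial point; only the multiplicative constant scales with the (finite) initial distance $\mathrm{dist}(\boldsymbol{x}_0, C)$, which is precisely the asserted independence of the convergence rate from $\boldsymbol{x}_0$.

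The routine parts are the averagedness of $T$ and the identification $\mathrm{Fix}(T) = C$; the main obstacle is the second half, namely producing a genuinely uniform linear rate. Bounded linear regularity by itself gives a rate constant that is uniform only over each bounded region, so one must argue that this is enough here --- every orbit $\{\boldsymbol{x}_n\}$ stays inside the ball of radius $\|\boldsymbol{x}_0 - \boldsymbol{x}^*\|$ about $\boldsymbol{x}^*$ because $T$ is nonexpansive and fixes $\boldsymbol{x}^*$, so a single regularity modulus governs the whole trajectory --- or, more cleanly, exploit homogeneity: $C_b$ is a linear subspace and $C_v$ a polyhedral cone, so by a scaling argument (or a Hoffman-type bound for polyhedra) the pair is in fact linearly regular on all of $\mathbb{R}^N$ with one global constant, and then no ``bounded'' caveat is needed at all. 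Making that step precise, and extracting the explicit constant $c$, is where the real effort goes.
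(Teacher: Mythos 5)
Your proposal is correct and rests on the same pillars as the paper's own Appendix A: firm nonexpansiveness of $\boldsymbol{P}_b$ and $\boldsymbol{P}_v$ (the paper's Lemma 1), bounded linear regularity of $\{C_b, C_v\}$ (the paper's Lemma 2), and the Bauschke--Borwein convex-feasibility machinery for alternating projections. The differences are in how the two halves are closed. For plain convergence the paper strings together citations (the iteration is cyclic, singular, unrelaxed; the operators converge actively pointwise; the orbit is asymptotically regular and Fej\'er monotone; Corollary 3.12 of Bauschke--Borwein then gives norm convergence to a point of $C$), whereas you argue more self-containedly: $0\in C\neq\emptyset$, the composition $T=\boldsymbol{P}_b\boldsymbol{P}_v$ is averaged, $\mathrm{Fix}(T)=C$ via the collapsing inequality chain, and Krasnoselskii--Mann plus finite-dimensionality gives strong convergence --- same underlying Fej\'er-monotonicity mathematics, but packaged as a fixed-point argument rather than a citation chain. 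On the rate, you are actually more careful than the paper: the paper proves only \emph{bounded} linear regularity and then invokes Theorem 5.7 of Bauschke--Borwein to claim an $\boldsymbol{x}_0$-independent rate, glossing over the fact that a boundedly-linearly-regular constant is a priori tied to the bounded set containing the orbit (which grows with $\mathrm{dist}(\boldsymbol{x}_0,C)$). Your repair --- observing that $C_b$ is a subspace and $C_v$ a polyhedral cone, so by positive homogeneity of the distance functions (or a Hoffman-type error bound for polyhedra) the pair is globally linearly regular with a single constant depending only on $\boldsymbol{U}_B$ and the sign pattern --- is exactly what is needed to make "the convergence rate is independent of $\boldsymbol{x}_0$" literally true, with only the multiplicative factor scaling as $\mathrm{dist}(\boldsymbol{x}_0,C)$. (Your other suggested fix, confining the orbit to the ball of radius $\|\boldsymbol{x}_0-\boldsymbol{x}^*\|$ about $\boldsymbol{x}^*$, would by itself give a modulus that still depends on $\boldsymbol{x}_0$, so the homogeneity/polyhedrality route is the one to keep.)
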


\begin{proof}
  See Appendix A for details.  
\end{proof}

\section{Design of Sampling Set}
\label{sec:sampling}
The previous section is about methods for recovering signals from sampled information.
This section explores how to optimize recovery.

Since the recovery sequence of the graph signal will converge to the feasible region, an attempt
to optimize the feasible region can be made to improve the recovery performance.
As we know, in the whole process of sampling and recovery, the feasible region of the signal depends on the sampling
results, which is manageable for us, we can adjust the feasible region by adjusting the sample set.
To sum up, we aim to design a sampling set with a better feasible region.

\subsection{Feasible Region Analysis}
\label{ssec:feasible}
From the previous discussion, we can already make it clear that the feasible region is a finite-dimensional closed convex cone.
If we want to make it smaller, it is necessary to determine a metric for its size.

Any vector in a closed convex  cone can be represented linearly by extreme vectors with non-negative coefficients\cite{barker1973lattice}:
\begin{equation}
  C=\left\{\sum_{i=1}^r \omega_i \boldsymbol{\varphi} _i |\, \omega_i \geq 0\right\}
\end{equation}
In our paper, the convex cone is finite-dimensional, i.e. the set of extreme vectors is finite.

Denote $\boldsymbol{U}_B=[\boldsymbol{u}_{f_L},...,\boldsymbol{u}_{f_u}]$, then since the columns of $\boldsymbol{U}_B$ are orthonormal, 
for any two signals $\boldsymbol{\beta}_1$, $\boldsymbol{\beta}_2$ in $C$ with coordinates $\boldsymbol{\alpha}_1$, $\boldsymbol{\alpha}_2$, we can get 
$\langle\boldsymbol{\beta}_1,\boldsymbol{\beta}_2\rangle=\langle\boldsymbol{U}_B\boldsymbol{\alpha}_1,\boldsymbol{U}_B\boldsymbol{\alpha}_2\rangle=\langle\boldsymbol{\alpha}_1,\boldsymbol{\alpha}_2\rangle$.
So in other words, denoting the feasible region of coordinates as $\hat{C}$, we can transform the exploration of the $C$ into the exploration of $\hat{C}$, because it is of one-to-one correspondence between vectors in $C$ and $\hat{C}$.
According to (\ref{sign space}), $\hat{C}$ can be written as:
\begin{equation}
  \hat{C} = \mathop{\cap}\limits_{\mathit{i}=1}^{\mathit{M}}\left\{\boldsymbol{x}\in\mathbb{R}^\mathit{B}|\,\boldsymbol{a}_\mathit{i}^\mathit{T}\boldsymbol{U}_B\boldsymbol{x}\leq 0\right\}
\end{equation}

Since the recovery signal is expected to be as close to the original signal as possible, that is, the angle between them is as small as possible, hence 
we try to use the biggest angle among extreme vectors as the metric. Let $Z$ be the set of normalized extreme vectors of $\hat{C}$, then the metric for its size can be represented as:
\begin{equation}
  \label{theta}
  \theta=\mathop{min}\limits_{\boldsymbol{\gamma},\boldsymbol{\mu} \in Z}\langle\boldsymbol{\gamma},\boldsymbol{\mu}\rangle
\end{equation}

\subsection{The Corresponding Optimization Problem}
\label{ssec:optimization}
Since the feasible region in the sampling process needs to satisfy the sign constraints under the current sampling result, 
an intuitive idea is that each additional sampling constraint can minimize the feasible region to the greatest extent.
On the grounds of this idea, we can express the problem in mathematical form. Assume the current sampling set is $\mathcal{S}$, next sample is $\xi$ after sampling $s$ times, 
$\boldsymbol{\gamma},\boldsymbol{\mu}$ are the coordinates of any two extreme vectors of $\hat{C}$ under orthonormal basis $\boldsymbol{U}_B$.
Then our sampling process can be described as: 
\begin{equation*}
  \begin{aligned} \label{P}
    &\mathop{max}\limits_{\xi}\left(\mathop{min}\limits_{\boldsymbol{\gamma,\mu}\in Z}\quad\frac{\langle\boldsymbol{\gamma,\mu}\rangle}{\left\|\boldsymbol{\gamma}\right\|\cdot\left\|\boldsymbol{\mu}\right\|}\right)\\
    \text{s.t.}\quad &sign\left[\mathop{\sum}\limits_{i=f_L}^{f_U}\left(\gamma_i\boldsymbol{u}_i\right)_j\right]=sign\left(\boldsymbol{x}_j\right)\\
      &sign\left[\mathop{\sum}\limits_{i=f_L}^{f_U}\left(\mu_i\boldsymbol{u}_i\right)_j\right]=sign\left(\boldsymbol{x}_j\right)\\
  \end{aligned}
\end{equation*}
where $j\in\mathcal{S}\cup{\{\xi\}}$.

\subsection{Greedy Sampling Algorithm}
\label{ssec:greedy}
According to the above optimization problems, the corresponding sampling algorithm can be designed.
It should be noted that there is no extreme vector until the number of samples reaches $B-1$. Therefore, we need to determine the first $B-1$ samples according to other strategies.
Here, we sort the norms of the row vectors of $\boldsymbol{U}_B$ in descending order, and select the first $B-1$ indices as the first $B-1$ samples.
After that, we can continue sampling according to the principle of reducing the maximum angle of extreme vectors.
But if we compute all the extreme vectors for each unsampled vertex every time, it's of very high complexity.

To address this issue, we propose a greedy sampling algorithm.
As we know, for every additional sample, we're adding a constraint to our current feasible region. For example, if we choose $\xi$ as the next sample, of which the sign is negative, in that way
the new feasible region becomes $\hat{C}\cap \{\boldsymbol{x} |\,\boldsymbol{U}_B(\xi,:)\boldsymbol{x} \leq 0\}$, where $\boldsymbol{U}_B(\xi,:)$ is the $\xi$th row vector of $\boldsymbol{U}_B$.
The boundary of the new constraint is a hyperplane $\{\boldsymbol{x} |\,\boldsymbol{U}_B(\xi,:)\boldsymbol{x} = 0\}$. To better illustrate, refer to Figure \ref{C}.
\begin{figure}[h]
  \centering
  \includegraphics[scale=0.3]{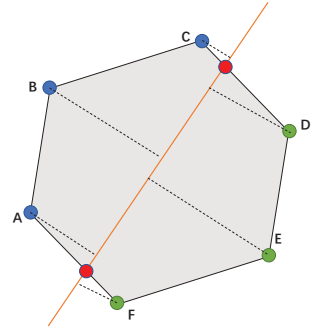}
  \caption{An example of a feasible region in sampling}
  \label{C}
\end{figure}
In Figure \ref{C}, the gray polygon area is a cross section of this cone, i.e. the feasible region, in down direction towards the origin. The solid orange line stands for the hyperplane $\{\boldsymbol{x} |\,\boldsymbol{U}_B(\xi,:)\boldsymbol{x} = 0\}$.
The blue and green solid dots represent the extreme vectors in the current feasible region, and the red solid dots represent the new extreme vectors associated with the hyperplane.
Without loss of generality, assume the left halfspace where the blue dots lie is $\{\boldsymbol{x} |\,\boldsymbol{U}_B(\xi,:)\boldsymbol{x} \leq 0\}$, which means if we choose $\xi$ as the next sample, then the new feasible region is
the part of the current feasible region to the left of the hyperplane. On condition that the sign of the signal on $\xi$ is positive, the part on the right side of the hyperplane is the new feasible region instead.
All unsampled vertexes correspond to a hyperplane, our target is to find a suitable unsampled vertex that its hyperplane can reduce the maximum angle of the extreme vectors.

For the extreme vectors with the biggest angle $\boldsymbol{t}_1,\boldsymbol{t}_2$ in the current feasible region, for example, $A$ and $D$ in Figure \ref{C}, if $\boldsymbol{t}_1$ and $\boldsymbol{t}_1$ are not separated by the new hyperplane of a unsampled vertex, then we would not choose it as the next sample, 
because if this pair of vectors are still in the new feasible region probably, the new feasible region might be as big as the previous one. We call such sampling invalid.

How to evaluate all the valid samples is the following question. Here we can see all the valid hyperplanes divide the feasible region into two parts. If these two parts are divided as evenly as possible, 
then whatever the sign is, the new feasible would not be too bad. In other words, if these two parts are greatly different from each other, there is a high risk that the new feasible region would be almost as large as the previous one.
As a result, we want all the extreme vectors to lie roughly ``equidistant" on both sides of the hyperplane under consideration. Refer to Figure \ref{C}, the dotted line segment stands for the distance from the corresponding extreme vector to the hyperplane.
If the sum of these distances on both sides are numerically close, as thus the feasible region is roughly ``cut" in half. Based on these analyses, the greedy algorithm can be presented as: 
\begin{breakablealgorithm} 
	\renewcommand{\algorithmicrequire}{\textbf{Input:}}
	\renewcommand{\algorithmicensure}{\textbf{Output:}}
	\caption{Greedy Sampling Algorithm} 
	\label{alg:1} 
	\begin{algorithmic}[1]
		\REQUIRE Sampling budget $M$, $\boldsymbol{U}_B$, bandwidth $B$, graph topology $\mathcal{G}$
		\ENSURE sampling set $\mathcal{S}$
    \STATE $\mathcal{S}\gets\{\}$,$r\gets-\infty$
    \STATE sort the norms of row vectors of $\boldsymbol{U}_B$ in descending order, get the first $B-1$ indices $\mathcal{S}_0$, $\mathcal{S} \gets \mathcal{S}_0$ 
    \FOR {$i$ in $\mathcal{G.V}\backslash\mathcal{S}$}
    \STATE calculate each $Z$ for $x_i > 0,x_i < 0,x_i = 0$, and the respective $\theta_1,\theta_2,\theta_3$ using (\ref{theta})
    \STATE $\theta \gets min\{\theta_1,\theta_2,\theta_3\}$
    \IF {$\theta > r$}
    \STATE $r \gets \theta$,$j \gets i$
    \ENDIF
    \ENDFOR
    \STATE $\mathcal{S} \gets \mathcal{S} \cup \{j\}$
    \STATE determine $\boldsymbol{t}_1$,$\boldsymbol{t_2}$ according to the acquired $sign(x_j)$
    \STATE $s \gets B+1$,$d \gets +\infty$
    \WHILE{$s < M+1$}
    \FOR {$i$ in $\mathcal{G.V}\backslash\mathcal{S}$}
    \IF {$\left[\boldsymbol{U}_B(i,:)\boldsymbol{t}_1\right]\cdot\left[\boldsymbol{U}_B(i,:)\boldsymbol{t}_2\right]<0$}
    \STATE calculate the sum of the distances $d_1,d_2$ of the extreme vectors on both sides to the hyperplane $\{\boldsymbol{x} |\,\boldsymbol{U}_B(i,:)\boldsymbol{x} = 0\}$
    \IF{$|d_1-d_2|<d$}
    \STATE $d \gets|d_1-d_2|$,$j\gets i$
    \ENDIF
    \ENDIF
    \ENDFOR
    \STATE $\mathcal{S} \gets \mathcal{S} \cup \{j\}$, $s \gets s+1$,$d \gets +\infty$
    \STATE update extreme vectors
		\ENDWHILE 
	\end{algorithmic} 
\end{breakablealgorithm}

\section{Simulation}
\label{sec:experiment}
In this section, we investigate the performance of the greedy sampling algorithm proposed in the previous section through simulation experiments.

In the following experiment, we first apply the  Algorithm 2 and some other sampling methods to get the corresponding sampling sets, and next we choose $K$ initial signals to recover the original signal using (\ref{iterformula}).
Then we compare the reconstruction quality by the average error in angle defined as:
\begin{equation}
  \delta = \frac{1}{K}\mathop{\sum}\limits_{i=1}^K \arccos\langle x,x_i^* \rangle
\end{equation}
where $x_i^*$ stands for the normalized recovery signal of the $i$th initial signal ($1\leq i \leq K$). The larger $\delta$ is, the bigger the error of recovery is.



Consider a band-limited graph signal on a sensor graph generated from gspbox in Matlab, which is shown in Figure \ref{graph}\cite{perraudin2014gspbox}.The parameters of the graph topology are: $N=40$, $|\mathcal{E}|=153$, $f_L=29$, $f_U=35$.
\begin{figure}[h]
  \centering
  \subfigure[A band-limited graph signal of unit amplitude]
  {
    \label{grapha}
    \hspace{-25pt}
      \begin{minipage}[h]{.45\linewidth}
          \centering
          \includegraphics[scale=0.32]{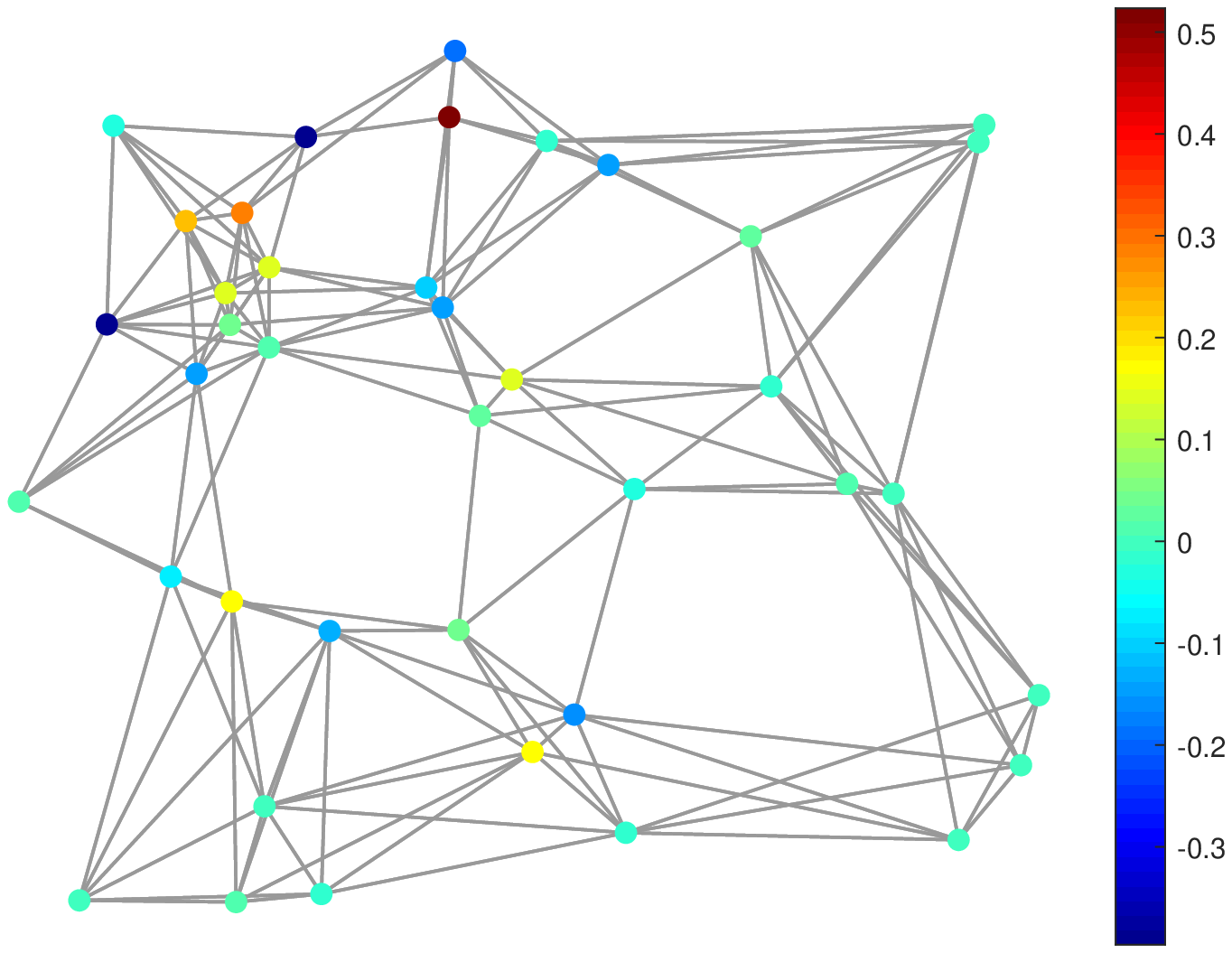}
      \end{minipage}
  }
  \subfigure[Sign information of (a)]
  {
    \label{graphb}
     \begin{minipage}[h]{.45\linewidth}
          \centering
          \includegraphics[scale=0.32]{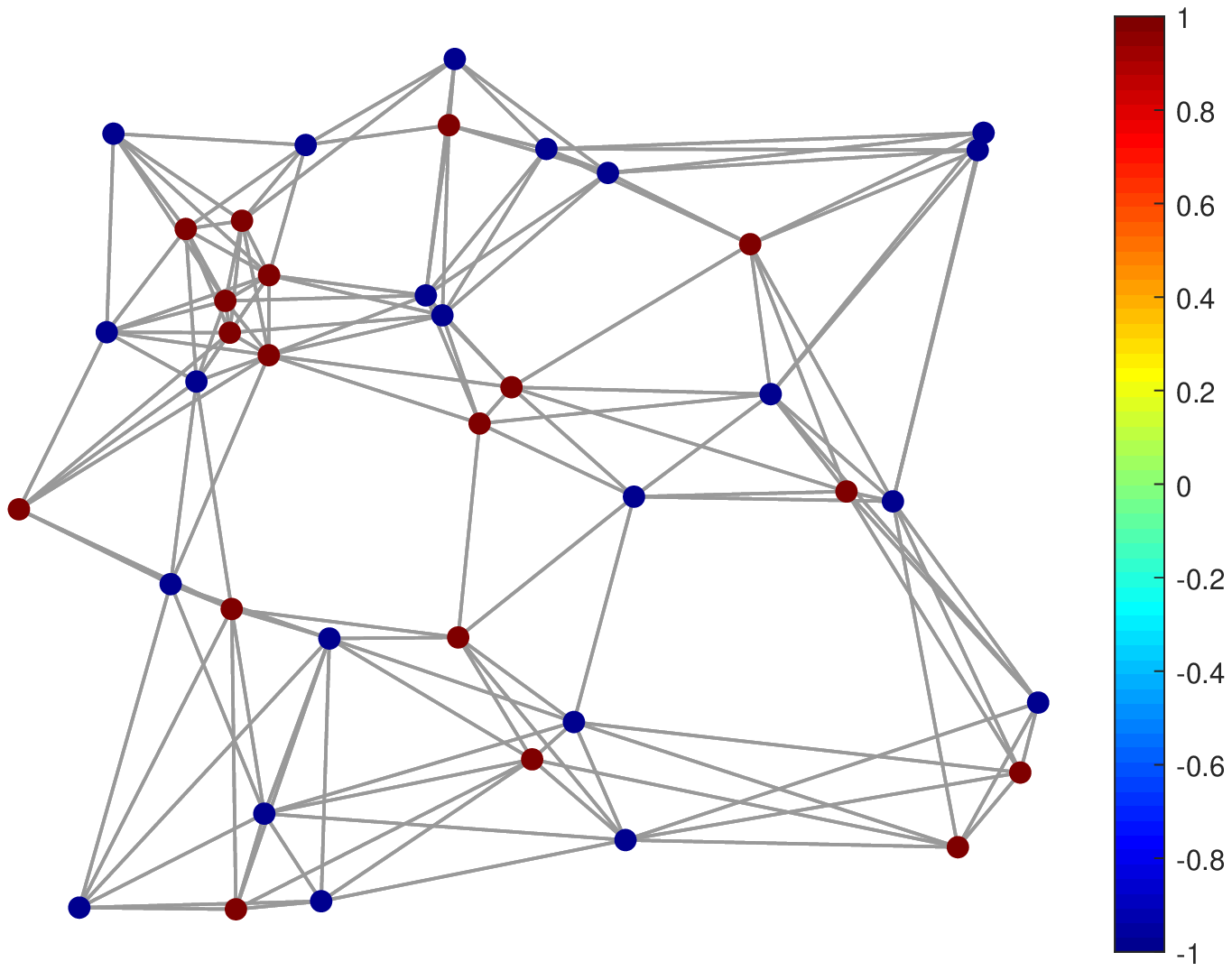}
      \end{minipage}
  }
  \caption{A band-limited graph signal of unit amplitude and its sign information on vertexes}
  \label{graph}
\end{figure}

We applied our sampling algorithm and random sampling algorithm 50 times to obtain their respective sample sets, 
and then calculate the recovery error of 10000 iterations under 50 initial random signals for each sampling set. 

We compared the average error in angle of greedy sampling and random sampling at different sampling rate, as presented in the Figure \ref{comparsion}\subref{vs}, where
the dotted pink line stands for the average error in angle for full sampling.

\begin{figure}[h]
  \centering
  \subfigure[]
  {
	  \label{vs}
	  \hspace{-20pt}
      \begin{minipage}[h]{1\linewidth}
          \centering
          \includegraphics[scale=0.48]{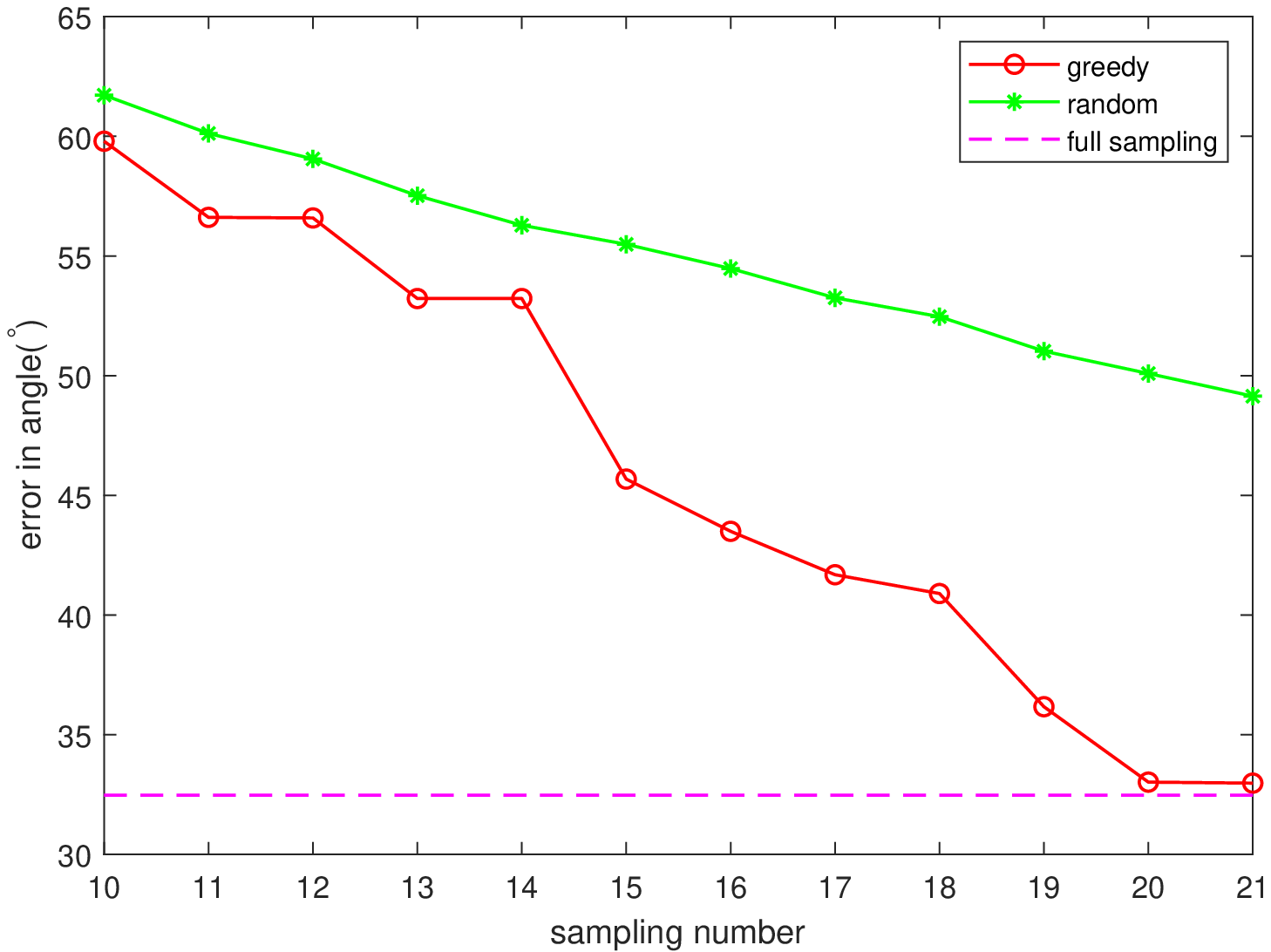}
      \end{minipage}
  }
  \subfigure[]
  {
	 \label{iter}
	 \hspace{-20pt}
     \begin{minipage}[h]{1\linewidth}
          \centering
          \includegraphics[scale=0.48]{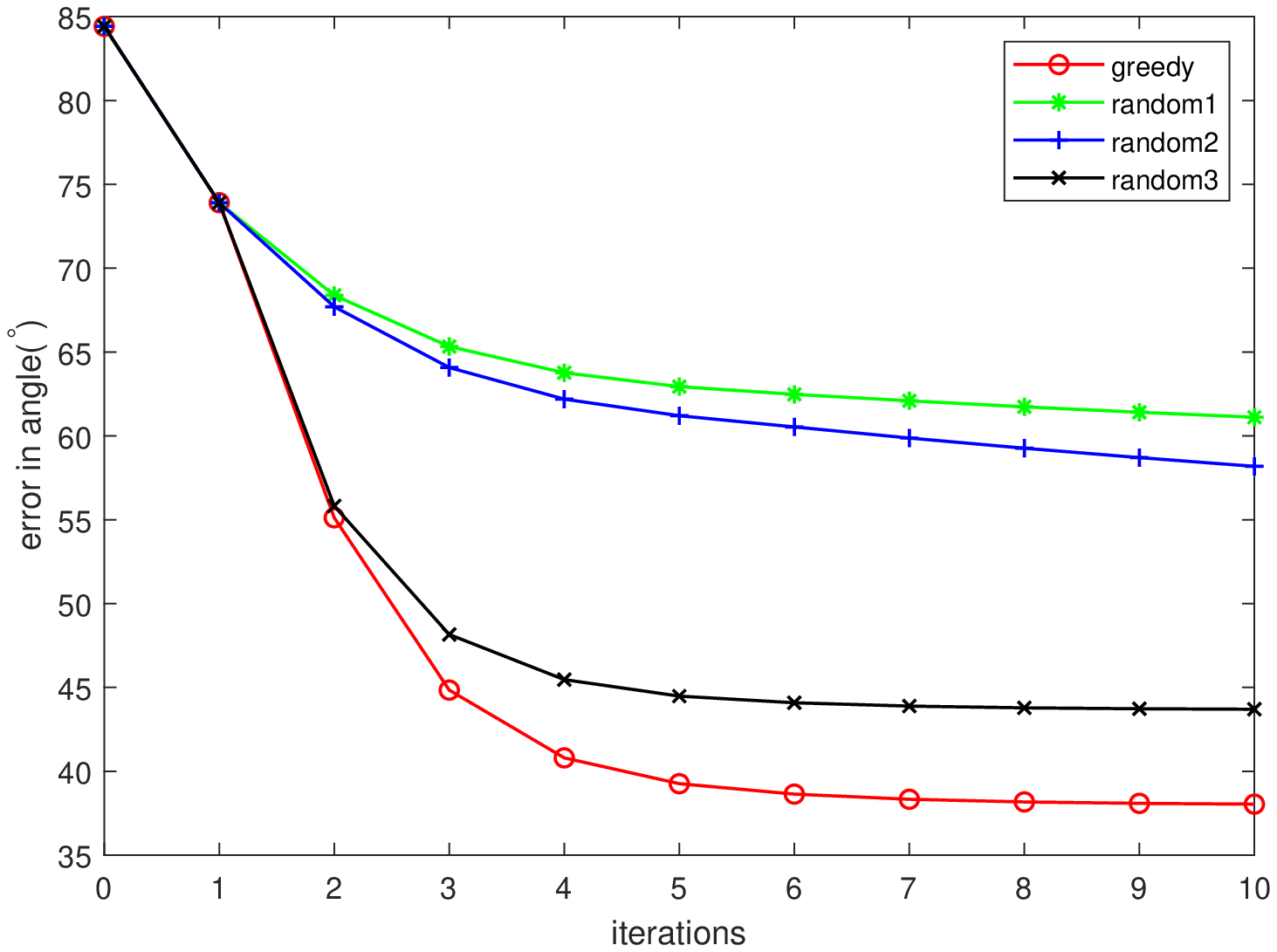}
      \end{minipage}
  }
  \caption{Comparison of recovery performance between greedy sampling and random sampling (a) at different sampling rate (b) during a iteration at 40\% sampling rate}
  \label{comparsion}
\end{figure}

As we can see, with the increase of sampling rate, the average error in angle goes down continuously. Among them, the error in angle of the greedy sample set is always the smallest, 
i.e. the recovery performance is the best.

In addition, we give the trend of the error in angle during the iteration in a certain recovery process at 40\% sampling rate in Figure \ref{comparsion}\subref{iter}. From the figure, 
under the proposed recovery algorithm, the error in angle converges quickly. Besides, compared with random sampling, the error in angle of greedy sampling set descends more steeply.

\section{Conclusion}
\label{sec:conclusion}
In this paper, we put forward the idea of sign sampling and recovery of band-limited graph signals.
Based on sign information on vertexes, we propose corresponding reconstruction and greedy sampling algorithm for band-limited graph siganl.
On the one hand, we guarantee the performance of the recovery algorithm by theoretical proof. On the other hand, simulation results show that greedy sampling algorithm
has better performance than random sampling.
In all, we think this work has practical significance to a certain extent and could be valuable in the field of sampling and recovery of graph signals.

\section{Appendix}
\label{sec:appendix}

\subsection{Appendix A}
\begin{lemma}
  Each projection is a firmly non-expansive operation.  
\end{lemma}
\begin{proof}
  From (\ref{Pb}), we can get $\left\|2\boldsymbol{P}_b-\boldsymbol{I}\right\|=\|2\boldsymbol{U}\boldsymbol{\Gamma}\boldsymbol{U}^T-\boldsymbol{I}\|$.
  $\boldsymbol{P}_b$ has only 2 eigenvalues, $0$ and $1$, because the diagonal elements of $\Gamma$ are either $0$ or $1$. Then   
  $2\boldsymbol{U}\boldsymbol{\Gamma}\boldsymbol{U}^T-\boldsymbol{I}$ has only 2 eigenvalues, $-1$ and $1$.
  It is not hard to demonstrate that $2\boldsymbol{U}\boldsymbol{\Gamma}\boldsymbol{U}^T-\boldsymbol{I}$ is symmetric, so 
  $(2\boldsymbol{U}\boldsymbol{\Gamma}\boldsymbol{U}^T-\boldsymbol{I})^T(2\boldsymbol{U}\boldsymbol{\Gamma}\boldsymbol{U}^T-\boldsymbol{I})=(2\boldsymbol{U}\boldsymbol{\Gamma}\boldsymbol{U}^T-\boldsymbol{I})^2$, which
  only has an eigenvalue of $1$. Hence, $\left\|2\boldsymbol{P}_b-\boldsymbol{I}\right\|=1$, the operator 
  $2\boldsymbol{P}_b-\boldsymbol{I}$ is non-expansive, i.e., $\boldsymbol{P}_b$ is firmly non-expansive.

  From (\ref{Pv}), we can get 
  \begin{equation*}
    \left(\left(2\boldsymbol{P}_v-\boldsymbol{I}\right)\boldsymbol{x}\right)_j=
    \left\{
      \begin{array}{lr}
        x_j      & sign(x_j) = \left(\boldsymbol{S}_x\right)_i \\
        -x_j                 & sign(x_j) \not = \left(\boldsymbol{S}_x\right)_i
      \end{array}
    \right.
  \end{equation*}
  Hence, $\left\|\left(2\boldsymbol{P}_v-\boldsymbol{I}\right)\boldsymbol{x}\right\|=\left\|\boldsymbol{x}\right\|$, the operator 
  $2\boldsymbol{P}_v-\boldsymbol{I}$ is non-expansive, i.e., $\boldsymbol{P}_v$ is firmly non-expansive.

\end{proof}

\begin{lemma}
  $\{C_b,C_v\}$ is boundedly linearly regular.
\end{lemma}
\begin{proof}
  For $C_b$,$C_v$ and $C$, they are all closed convex cones in finite-dimensional Hilbert spaces.
  By means of Proposition 5.4 and 5.9 in \cite{bauschke1996projection}, we can derive that the $\{C_b,C_v\}$ is boundedly linearly regular.
\end{proof}

  In (\ref{iterformula}), Each iteration is actually made up of two firmly non-expansive operations.
  According to the algorithm settings in \cite{bauschke1996projection}, it is not difficult to verify that this algorithm is cyclic, singular, and unrelaxed.
  Furthermore, refer to Definition 3.18 in \cite{bauschke1996projection}, we can know that the sequence of projection operators converges actively pointwise, meanwhile the iterative sequence $\{\boldsymbol{x}_n\}$ is asymptotically regular
  and Fej\'er monotone in terms of Corollary 3.5 and Lemma 3.2(iv) in \cite{bauschke1996projection}, which yields $\{\boldsymbol{x}_n\}$ converges in norm.

  With all these discoveries, under the help of lemma 2 and using Corollary 3.12 in \cite{bauschke1996projection}, it is proved that $\left\{\boldsymbol{x}_\mathit{n}\right\}$ converges to some point in $C$.
  Then by Theorem 5.7 in \cite{bauschke1996projection}, we can futher conclude that the conver-gence rate is independent of the selection of the initial point $\boldsymbol{x}_0$.
\label{A-A}

\bibliographystyle{IEEEbib}
\bibliography{ref}

\end{document}